\documentclass{article}
\usepackage[utf8]{inputenc}
\usepackage{color}
\usepackage[T1]{fontenc}
\usepackage[normalem]{ulem}
\usepackage[english]{babel}
\usepackage{verbatim}
\usepackage{graphicx}
\usepackage{enumerate}
\usepackage{amsmath,amssymb,amsfonts,amsthm,amscd,mathrsfs}
\usepackage{array}
\usepackage{amsmath,amssymb,graphicx,stmaryrd,enumerate,bbm,alltt}
\usepackage{dsfont}
\usepackage{comment}
\usepackage{mathtools}
\usepackage{bbm}

\usepackage[T1]{fontenc}
\usepackage{babel}
\usepackage[bookmarksopen, bookmarksnumbered]{hyperref}

\usepackage{url}
\usepackage{caption}

\DeclareMathOperator\supp{supp}
\newtheorem{theorem}{Theorem}[section]

\newtheorem{lemma}[theorem]{Lemma}

\newtheorem{prop}[theorem]{Proposition}

\newtheorem{conj}[theorem]{Conjecture}
\theoremstyle{definition}

\newcommand{\Z}{\mathbb Z}
\newcommand{\R}{\mathbb R}

\title{On Support Cardinality for the Discrete Schrödinger Equation}
\author{Linjun Li
\thanks{Department of Mathematics, University of Pennsylvania, Philadelphia, PA.} 
}
\date{}

\begin{document}

\maketitle
\begin{abstract}
How sparse can a nontrivial solution of a discrete Schr\"odinger equation be?  
In this note, we study Dirichlet solutions on a finite $d$-dimensional lattice box, allowing an arbitrary real potential, and measure sparsity by the number of lattice sites at which the solution is nonzero (assuming it is nonzero at the origin).  
Our main result is a dimension-reduction principle: the minimal possible support size cannot decrease when the dimension increases.  
Consequently, any lower bound proved in dimension $d-1$ automatically yields the same lower bound in dimension $d$.  
As an application, we obtain a nearly sharp lower bound in four dimensions, matching the best known two-dimensional constructions up to a logarithmic factor.
\end{abstract}

\section{Introduction}

The discrete Laplacian on $\Z^d$,
\[
(\Delta u)(a)= -2d\,u(a)+\sum_{|a-b|=1}u(b),
\]
is the discrete analog of the usual Laplacian in $\R^{d}$ and also the standard graph Laplacian on the nearest-neighbor lattice.  Adding a potential $V$ leads to the (stationary) \emph{discrete Schr\"odinger equation}
\begin{equation}\label{eq:Sch_intro}
-\Delta u + V u = 0,
\end{equation}
a basic model in discrete potential theory and in the tight-binding description of quantum systems.  When the domain is a finite box and one imposes Dirichlet boundary conditions, \eqref{eq:Sch_intro} becomes a concrete linear system whose solutions can be studied by elementary means; nevertheless, it hides genuinely subtle phenomena that do \emph{not} occur in the continuum.

A guiding principle for solutions of the Schr\"odinger equation on $\R^d$ is \emph{unique continuation}: a nontrivial solution cannot vanish on an open set.  On the lattice, however, solutions may be supported on lower-dimensional subsets (See Proposition \ref{prop:example}), so one is led to ask for \emph{quantitative} substitutes: how sparse can a nonzero solution's support be?  The present note studies this question by measuring sparsity by the \emph{cardinality of the support}.

\medskip

Fix $N\ge 1$ and write
\[
B_{d}(N)=\{(x_1,\dots,x_d): x_i\in\{-N,-N+1,\dots,N\}\}
\] and 
\[
I_N=\{-N,-N+1,\dots,N\}.
\]
We view $\Delta$ as the Laplacian restricted to $B_d(N)$ with \emph{zero boundary condition} on
$\partial B_d(N)=B_d(N+1)\setminus B_d(N)$. For any function $u$ on $B_{d}(N)$, denote the support of $u$ by $\supp(u)=\{x \in B_{d}(N): u(x) \neq 0\}$. 
Define $S_d(N)$ as the minimal possible cardinality of $\supp(u)$ among all functions $u$ on $B_d(N)$ satisfying:
\begin{itemize}
\item there exists a real-valued function $V$ on $B_d(N)$ such that $u$ solves $-\Delta u +Vu=0$ on $B_d(N)$ with zero boundary condition;
\item $u(\mathbf{0})\neq 0$.
\end{itemize}
In other words, $S_d(N)$ asks: \emph{among all discrete Schr\"odinger equations on the box, how sparse can a nontrivial Dirichlet solution be, if it is required to be nonzero at the origin?}

A simple construction (given in Proposition~\ref{prop:example}) shows that sparsity can indeed occur in high dimensions: one can build solutions supported on about $(2N+1)^{d/2}$ points when $d$ is even, and about $(2N+1)^{(d+1)/2}$ points when $d$ is odd.  This raises a natural question, reminiscent of uncertainty principles and unique continuation: \emph{is this exponent essentially optimal, or can one force solutions to occupy a strictly smaller fraction of the box?}

\medskip

Our main result is a monotonicity principle in the dimension.

\begin{theorem}\label{thm:main}
For any $d\ge 2$ and $N\ge 1$, one has
\[
S_{d}(N)\ge S_{d-1}(N).
\]
\end{theorem}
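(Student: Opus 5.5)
The plan is to turn the $d$-dimensional solution into a $(d-1)$-dimensional one by taking a suitable weighted sum of $u$ along the last coordinate. The first step is to reformulate the constraint so that the potential disappears. Since $V$ is arbitrary, a function $u$ on $B_d(N)$ (extended by zero to $\partial B_d(N)$) is admissible if and only if
\[
\sum_{|b-a|=1} u(b)=0 \qquad\text{for every } a\in B_d(N) \text{ with } u(a)=0 .
\]
Indeed, at sites where $u(a)\neq 0$ we can always solve for $V(a)$ from $-\Delta u(a)+V(a)u(a)=0$. So $S_d(N)$ is the minimal support size of a function with $u(\mathbf 0)\ne0$ whose neighbor sums vanish at every zero of $u$ inside the box.

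Take a minimizer $u$ for $S_d(N)$ and write points as $(x',t)$ with $x'\in B_{d-1}(N)$ and $t\in I_N$. Fix a real parameter $\mu$ and define weights $\lambda_t=\lambda_t(\mu)$ for $t\in I_{N+1}$ by
\[
\lambda_{-N-1}=0,\qquad \lambda_{-N}=1,\qquad \lambda_{t-1}+\lambda_{t+1}=\mu\lambda_t \quad (t\in I_N).
\]
Thus $\lambda_{t}$ is a Chebyshev-type polynomial in $\mu$ of exact degree $t+N$. Only the recurrence for $t\in I_N$ is imposed; $\lambda_{N+1}$ is left free. This is legitimate because $u(x',\pm(N+1))=0$. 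Set
\[
w(x')=\sum_{t\in I_N}\lambda_t\,u(x',t).
\]
Summation by parts in $t$, using the recurrence and the zero boundary values, gives
\[
\sum_{t}\lambda_t\big(u(x',t+1)+u(x',t-1)\big)=\mu\,w(x').
\]
Therefore
\[
\sum_t \lambda_t \sum_{|b-(x',t)|=1}u(b)=\sum_{|y'-x'|=1}w(y')+\mu\,w(x').
\]
Here the neighbors $y'$ are $(d-1)$-dimensional, and $w$ vanishes on $\partial B_{d-1}(N)$ because $u$ does.

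The key step, which I expect to be the only real point, is choosing $\mu$ so that $w(x')=0$ exactly when the whole fiber $u(x',\cdot)$ vanishes. For each $x'$ with a nonzero fiber, $\mu\mapsto\sum_t u(x',t)\lambda_t(\mu)$ is a nonzero polynomial, since the $\lambda_t$ have distinct degrees. There are finitely many such $x'$, hence finitely many roots to avoid, and we pick a real $\mu$ outside all of them.

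With this $\mu$, the support of $w$ is exactly the projection of $\supp(u)$, so $|\supp(w)|\le|\supp(u)|=S_d(N)$. Also $w(\mathbf 0')\ne0$, because the fiber over the origin contains $u(\mathbf 0)\neq0$. Now take $x'$ with $w(x')=0$. Then $u(x',t)=0$ for all $t$, so each inner neighbor sum on the left of the displayed identity vanishes by the reformulation. Hence $\sum_{|y'-x'|=1}w(y')=0$, so $w$ is admissible in dimension $d-1$. This gives $S_{d-1}(N)\le S_d(N)$.
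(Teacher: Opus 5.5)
Your strategy is the paper's. You contract each fiber $u(x',\cdot)$ against a weight vector that annihilates no nonzero fiber; your $a=(\lambda_t(\mu))_{t\in I_N}$, with entries of distinct degrees, is a valid substitute for the paper's $a$ chosen outside finitely many hyperplanes. You then solve for the potential wherever the reduced function is nonzero; your reformulation is equivalent to the paper's $V_1=T/\mathbf F(u)$.

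However, the summation-by-parts identity you display is false as written. The boundary zeros $u(x',\pm(N+1))=0$ only remove the terms $\lambda_{\pm N}\,u(x',\pm(N+1))$. At the bottom end your condition $\lambda_{-N-1}=0$ does the rest, but at the top end nothing does. The coefficient of $u(x',N)$ on the left is $\lambda_{N-1}$, whereas $\mu\lambda_N=\lambda_{N-1}+\lambda_{N+1}$. The correct statement is
\[
\sum_{t\in I_N}\lambda_t\bigl(u(x',t+1)+u(x',t-1)\bigr)=\mu\,w(x')-\lambda_{N+1}(\mu)\,u(x',N).
\]
For $N=1$ the discrepancy is $(\mu^3-2\mu)\,u(x',1)$. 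Imposing $\lambda_{N+1}(\mu)=0$ would confine $\mu$ to the $2N+1$ values $2\cos\bigl(k\pi/(2N+2)\bigr)$. Nothing guarantees one of these avoids the roots you must avoid, so that is not the right repair.

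The conclusion survives because you only use the identity at points $x'$ with $w(x')=0$. There the whole fiber vanishes, so both $\mu\,w(x')$ and $\lambda_{N+1}u(x',N)$ are zero. In fact, at such $x'$ the vertical contributions $\sum_t\lambda_t\bigl(u(x',t+1)+u(x',t-1)\bigr)$ vanish for any weights, since they involve only the fiber over $x'$ and the boundary zeros. This is exactly the paper's observation that $T(x)$ depends only on $u_x$, so $T(x)=0$ whenever $\mathbf F(u)(x)=0$. The recurrence therefore plays no role. Delete it, keep only the fiber-separating property of the weights, and your argument coincides with the paper's proof.
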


Thus, lower bounds in a given dimension automatically propagate to all higher dimensions. 
For $d=1$, it is easy to see $S_{1}(N)=\Omega(N)$: $\supp(u)$ cannot contain two consecutive zeros without forcing $u$ to vanish on the entire interval $B_{1}(N)$ by iterating the equation, contradicting $u(\mathbf 0)\ne 0$. For $d=2$, the same idea implies that a nonzero solution $u$ cannot vanish on two consecutive lines in the box; thus $S_{2}(N)=\Omega(N)$. For $d=3$, Li and Zhang leveraged previous work of Buhovsky--Logunov--Malinnikova--Sodin \cite{buhovsky2017discrete} and proved that
\begin{equation}\label{eq:3d}
    S_{3}(N)\ \ge\ c\,\frac{N^{2}}{\log N}
\end{equation}
as an implication of Theorem~5.1 in \cite{li2022anderson}. Thus, combined with examples in Proposition \ref{prop:example}, for $d=1,2,3$, the exponent-optimal bound of $S_{d}(N)$ is obtained.
Combining \eqref{eq:3d} with our Theorem~\ref{thm:main} yields the same lower bound in $d=4$:
\[
S_{4}(N)\ \ge\ c\,\frac{N^{2}}{\log N}.
\]
Together with the explicit two-dimensional support construction in Proposition~\ref{prop:example}, this gives an exponent-sharp estimate in four dimensions up to a logarithmic factor:
\[
(2N+1)^2 \ \ge\ S_4(N)\ \ge\ c\,\frac{N^2}{\log N}.
\]

Exponent-sharp lower bounds in higher dimensions remain unknown, and one may conjecture the following.
\begin{conj}
For $d \geq 1$, we have
    $S_{d}(N) = \Omega(N^{\lceil{d/2}\rceil})$.
\end{conj}
It is also illuminating to compare with the \emph{whole-space} problem on $\Z^d$. In that setting, one may ask for the (fractal) \emph{dimension} of the support of a nontrivial solution to \eqref{eq:Sch_intro}.  Krymskii \cite{kry2024discrete} proved that any nonzero solution on $\Z^d$ must have support-dimension at least $\log_2(d)-7$, and also constructed $\Z_2$-valued harmonic examples with dimension about $\log_2(d)+1$. We note that the $\Z_2$-valued example is very different from the real-valued case because the real-valued function can exhibit polynomial structure along tilted lines in $\Z^{d}$ as shown in \cite{li2022anderson,buhovsky2017discrete}.

The motivation of obtaining a lower bound for support (or in general, small-value set) of solution of discrete Schr\"odinger equation is highly related to Anderson localization under Bernoulli potential, dating back to the work of Bourgain--Kenig \cite{bourgain2005localization}: in the continuum, lower bounds preventing an eigenfunction from being ``too small on too large a set'' are closely intertwined with Landis-type questions on how fast a \emph{real-valued} solution of $\Delta u+Vu=0$ can decay at infinity (and with how nodal geometry constrains such decay); see, for instance, the recent survey on Landis' conjecture and related unique continuation phenomena \cite{landisSurvey2024} and the breakthrough progress of Logunov-Malinnikova-Nadirashvili-Nazarov establishing a near-sharp Landis bound for $d=2$ \cite{logunov2025landis}.
On the lattice, quantitative unique continuation in the strongest continuum form is false, but a remarkable substitute survives in $d=2$ thanks to rigid two-dimensional structure. Roughly speaking, Buhovsky--Logunov--Malinnikova--Sodin \cite{buhovsky2017discrete} proved that a discrete harmonic function on $\mathbb Z^2$ that is bounded on a $(1-\varepsilon)$-portion of every sufficiently large square must be constant. 

Later it was found that randomness of the potential $V$ will increase the support dimension.
Building on previous result of Buhovsky--Logunov--Malinnikova--Sodin, Ding and Smart proved a \emph{random} quantitative unique continuation statement for eigenfunctions of \eqref{eq:Sch_intro}: Fix a small interval and let $V$ be an i.i.d. Bernoulli potential. Then with high probability, an eigenfunction of \eqref{eq:Sch_intro} with eigenvalue lying in this interval must occupy a set of sites of size with $\frac{3}{2} - \epsilon$ dimension, in particular, it cannot concentrate on a 1-dimensional sparse set \cite{ding2020localization}.
Later on, for the same random case, the above lower bound of support size is improved to 2-dimensional \cite{li2022ab2d}. For dimension 3, near 2-dimensional lower bound of support \eqref{eq:3d} was proved in \cite{li2022anderson} for any potential. It is still unknown that if the potential is chosen i.i.d. from Bernoulli potential, whether the support of eigenfunctions will be 3-dimensional with high probability. 
For \cite{bourgain2005localization,ding2020localization,li2022ab2d,li2022anderson}, the lower bound for size of small-value set was a key to the proof of localization with Bernoulli potential. For the background of Anderson-Bernoulli model, we refer to \cite{carmona1987bernoulli,germinet2001bootstrap,germinet2011comprehensive} for historical results. In parallel, quantitative transport and diffusion questions in other related disordered models (notably random band matrices and kinetic models) have also advanced rapidly recently, with rigorous results on localization and delocalization \cite{erdos2013rbm,bourgade2020rbm1,yangyauyin2022highd2,xu2024rbm,elboim2025mirrorhighd,li2021manhattan,li2021lorentzcyl,yauyin2025}.

\section{Dimension Reduction}
 
For any solution $u$, we prove that there is a function $\mathbf{F}(u)$ which solves the equation in the $(d-1)$-dimensional box $B_{d-1}(N)$ and $|\{x \in B_{d}(N): u(x) \neq 0\}| \geq |\{x \in B_{d-1}(N): \mathbf{F}(u)(x) \neq 0\}|$.

Indeed, think of $B_{d}(N)$ as a fiber of $B_{d-1}(N)$, i.e. $B_{d}(N) = B_{d-1}(N) \times I_{N}$. For any $x \in B_{d-1}(N)$ with $x=(x_1,...,x_{d-1})$, let $u_{x}$ denote the function on $I_{N}$ such that $u_{x}(y) = u(x_1,...,x_{d-1}, y)$ for $y \in I_{N}$.
\begin{lemma}\label{lem:vector}
There exists a vector $a \in \R^{2N+1}$ such that, for any $x \in B_{d-1}(N)$, we have $u_{x} \neq 0$ if and only if $ \langle a, u_{x} \rangle \neq 0$.
\end{lemma}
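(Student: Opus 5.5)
Since $B_{d-1}(N)$ is finite, the family $\{u_x : x\in B_{d-1}(N),\ u_x\neq 0\}$ is a finite collection of nonzero vectors in $\R^{2N+1}$. The condition in Lemma~\ref{lem:vector} says exactly that $a$ lies outside each hyperplane $H_x=\{a\in\R^{2N+1}:\langle a,u_x\rangle=0\}$ with $u_x\neq 0$. The reverse implication is automatic: if $u_x=0$, then $\langle a,u_x\rangle=0$ for every $a$. So the plan is to choose $a$ in the complement of a finite union of hyperplanes. This is a standard avoidance argument, and it uses no information about the equation $-\Delta u+Vu=0$.

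The steps are as follows. First, for each $x$ with $u_x\neq 0$, note that $H_x$ is a proper linear subspace of dimension $2N$. Hence it is a closed set with empty interior, and it has Lebesgue measure zero in $\R^{2N+1}$. Second, a finite union of measure-zero sets has measure zero, so $\bigcup_{x:\,u_x\neq0}H_x\neq\R^{2N+1}$. Alternatively, a finite union of nowhere dense closed sets is nowhere dense by Baire, or even more simply by induction. Third, pick any $a$ outside this union; it satisfies both directions of the equivalence.

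For a fully explicit choice, one could take the moment curve $a=(1,t,t^2,\dots,t^{2N})$. For each nonzero $u_x$, the expression $\langle a,u_x\rangle=\sum_{j}u_x(j-N-1)\,t^{j-1}$ is a nonzero polynomial in $t$ of degree at most $2N$, so it has at most $2N$ real roots. With at most $(2N+1)^{d-1}$ such polynomials, only finitely many values of $t$ are excluded, and any other $t$ works.

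No step here is a real obstacle. The only point needing care is that the union is finite, which holds because the box is finite; the polynomial argument makes this completely elementary. I expect the real work to come later: one sets $\mathbf F(u)(x)=\langle a,u_x\rangle$ and then has to check that this function solves a $(d-1)$-dimensional Dirichlet Schrödinger equation with $\mathbf F(u)(\mathbf 0)\neq0$. That may require choosing $a$ more cleverly, for instance in terms of the transfer structure along the fiber, but Lemma~\ref{lem:vector} as stated needs only the genericity argument above.
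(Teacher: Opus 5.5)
Your proof is correct and is essentially the paper's argument: the paper also takes, for each nonzero fiber $u_x$, the hyperplane $H_x=\{a:\langle a,u_x\rangle=0\}$ and picks $a$ outside their finite union, with zero fibers handled trivially. Your measure-zero and moment-curve arguments just spell out the step the paper asserts directly, and, as an aside, the later reduction needs no cleverer choice of $a$, since the paper sets $V_1:=T/\mathbf F(u)$ where $\mathbf F(u)\neq 0$ (and $V_1:=0$ elsewhere), which works for any $a$ given by this lemma.
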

\begin{proof}
Let $m:=2N+1$, so that each fiber $u_x$ is naturally identified with a vector in $\R^{m}$ via
\[
u_x \equiv \bigl(u_x(-N),u_x(-N+1),\dots,u_x(N)\bigr)\in\R^{m}.
\]
Consider the finite index set
\[
\mathcal X:=B_{d-1}(N),
\]
and let
\[
\mathcal S:=\{x\in\mathcal X:\ u_x\neq 0\}.
\]
If $\mathcal S=\varnothing$, then $u_x=0$ for every $x\in\mathcal X$ and the statement holds for any choice of $a$ (for instance $a=e_1$), since then $\langle a,u_x\rangle=0$ for all $x$.

Assume henceforth that $\mathcal S\neq\varnothing$. For each $x\in\mathcal S$, define the hyperplane
\[
H_x:=\{a\in\R^{m}:\ \langle a,u_x\rangle=0\}.
\]
Because $u_x\neq 0$, the map $a\mapsto \langle a,u_x\rangle$ is a nonzero linear functional on $\R^{m}$, and therefore $H_x$ is a proper linear subspace of codimension $1$ in $\R^{m}$.

Since $\mathcal S$ is finite (indeed $\mathcal X$ is finite), the union
\[
H:=\bigcup_{x\in\mathcal S} H_x
\]
is a finite union of codimension $1$ subspaces. Therefore $\R^{m}\setminus H$ is nonempty. Choose any
\[
a\in \R^{m}\setminus H.
\]
Then for every $x\in\mathcal S$ we have $a\notin H_x$, i.e.\ $\langle a,u_x\rangle\neq 0$. On the other hand, if $x\in\mathcal X\setminus\mathcal S$ then $u_x=0$ and hence $\langle a,u_x\rangle=0$.

Combining the two cases, for every $x\in B_{d-1}(N)$ we have
\[
u_x\neq 0 \quad\Longleftrightarrow\quad \langle a,u_x\rangle\neq 0,
\]
which proves the lemma.
\end{proof}

Assume $d \geq 2$. Pick a vector $a$ that satisfies Lemma \ref{lem:vector} and denote $\mathbf{F}(u)(x) = \langle a, u_{x}\rangle$ for any $x \in B_{d-1}(N)$. 
\begin{lemma}\label{lem:reduce}
There exists a real-valued function $V_{1}$ on $B_{d-1}(N)$ such that $-\Delta \mathbf{F}(u) + V_{1} \mathbf{F}(u) = 0$ on $B_{d-1}(N)$ with the zero boundary condition.
\end{lemma}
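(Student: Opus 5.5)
The plan is to reduce the existence of a potential to a purely combinatorial condition on zeros, and then check that condition for $\mathbf F(u)$ fiber by fiber using Lemma~\ref{lem:vector}.

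\textbf{Reformulating the existence of $V_1$.} For any real function $w$ on $B_{d-1}(N)$, extended by zero to $\partial B_{d-1}(N)$, there is a real $V_1$ with $-\Delta w+V_1w=0$ on $B_{d-1}(N)$ if and only if $\Delta w(x)=0$ at every $x$ with $w(x)=0$. Indeed, where $w(x)\neq 0$ we simply set $V_1(x):=\Delta w(x)/w(x)$. Where $w(x)=0$, the equation reads $-\Delta w(x)=0$ regardless of $V_1(x)$, and we may put $V_1(x):=0$. Since $w(x)=0$, the condition $\Delta w(x)=0$ amounts to $\sum_{|x'-x|=1} w(x')=0$, with $w(x')=0$ for $x'\notin B_{d-1}(N)$.

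Applied to $u$ itself, this says that at every $z\in B_d(N)$ with $u(z)=0$, the sum of the $2d$ neighbouring values of $u$ vanishes. Here values outside $B_d(N)$ count as $0$.

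\textbf{Boundary convention for $w=\mathbf F(u)$.} Set $w=\mathbf F(u)$, so $w(x)=\langle a,u_x\rangle$. For $x'\in\partial B_{d-1}(N)$ the whole fiber $\{x'\}\times I_N$ lies in $\partial B_d(N)$. Hence $u_{x'}=0$ and $w(x')=0$, so the formula $w(x')=\langle a,u_{x'}\rangle$ stays valid on the boundary. This makes the zero boundary condition for $w$ consistent with the fiber formula.

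\textbf{Verifying the condition at zeros of $w$.} Fix $x\in B_{d-1}(N)$ with $w(x)=0$. By Lemma~\ref{lem:vector}, this forces $u_x\equiv 0$, so $u(x,y)=0$ for every $y\in I_N$. Apply the equation for $u$ at each point $(x,y)$. The two vertical neighbours $(x,y\pm1)$ are either in the zero fiber or on $\partial B_d(N)$, so they contribute nothing. What remains is
\[
\sum_{|x'-x|=1} u_{x'}(y)=0 \qquad\text{for every } y\in I_N,
\]
that is, $\sum_{|x'-x|=1}u_{x'}=0$ as vectors in $\R^{2N+1}$. Pairing this identity with $a$ and using linearity gives $\sum_{|x'-x|=1}w(x')=0$. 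Together with $w(x)=0$, this yields $\Delta w(x)=0$, so the reformulated condition holds and $V_1$ exists.

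\textbf{Main obstacle.} The delicate step is the last one, and it is exactly where Lemma~\ref{lem:vector} is needed. A mere zero of the linear combination $\langle a,u_x\rangle$ would give no information about the neighbouring fibers. It is only because $w(x)=0$ forces the entire fiber $u_x$ to vanish that the $d$-dimensional equation can be used at every height $y$. The remaining points, namely the treatment of the vertical neighbours and the boundary terms, are routine bookkeeping.
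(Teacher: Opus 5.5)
Your proof is correct and is essentially the paper's argument: both hinge on Lemma~\ref{lem:vector} forcing the entire fiber $u_x$ to vanish wherever $\mathbf F(u)(x)=0$, so that pairing the $d$-dimensional equation with $a$ gives $\Delta_{d-1}\mathbf F(u)(x)=0$ at such $x$, and off this zero set $V_1=\Delta_{d-1}\mathbf F(u)/\mathbf F(u)$ (the paper's $T/\mathbf F(u)$ is the same function, because its identity shows $T=\Delta_{d-1}\mathbf F(u)$). The difference is only presentational: the paper pairs the equation with $a$ on every fiber and then checks that the remainder $T$ vanishes on zero fibers, whereas you pair only on the zero fibers after discarding the vertical neighbours, and you also state the boundary convention for $\mathbf F(u)$ explicitly.
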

\begin{proof}
Write $m:=2N+1$ and identify each fiber $u_x$ with the vector
\[
u_x=\bigl(u(x,-N),u(x,-N+1),\dots,u(x,N)\bigr)\in\R^{m}.
\]
Fix a vector $a\in\R^{m}$ as in Lemma~\ref{lem:vector} (applied on the finite set of fibers $\{u_x:x\in B_{d-1}(N)\}$), and define
\[
\mathbf F(u)(x):=\langle a,u_x\rangle,\qquad x\in B_{d-1}(N).
\]

\medskip
\noindent\textbf{Step 1: apply $\langle a,\cdot\rangle$ to the equation fiberwise.}
For each interior point $(x,y)\in B_{d-1}(N)\times I_N=B_d(N)$, the equation $-\Delta u+Vu=0$ reads
\begin{align*}
0
&= -\Bigl(-2d\,u(x,y)+\sum_{i=1}^{d-1}\bigl(u(x+\mathbf e_i,y)+u(x-\mathbf e_i,y)\bigr)
      +u(x,y+1)+u(x,y-1)\Bigr) \\
&\qquad\qquad +\,V(x,y)\,u(x,y).
\end{align*}
Multiply this identity by the coordinate $a_y$ (the component of $a$ indexed by $y\in I_N$) and sum over $y\in I_N$.
Using linearity and the definition of $\mathbf F(u)$, we obtain for every $x\in B_{d-1}(N)$:
\begin{equation}\label{eq:key_identity}
-\Delta_{d-1}\mathbf F(u)(x)
\;+\;
\sum_{y\in I_N} a_y\Bigl(-\Delta^{(1)} u_x(y)+V(x,y)\,u_x(y)\Bigr)=0.
\end{equation}
Here $\Delta_{d-1}$ is the $(d-1)$-dimensional lattice Laplacian on $B_{d-1}(N)$ with zero boundary condition, and
\[
\Delta^{(1)} f(y):=-2f(y)+f(y-1)+f(y+1)
\]
is the one-dimensional lattice Laplacian in the last coordinate, with the Dirichlet convention $f(y)=0$ for $y\notin I_N$
(which matches the boundary condition for $u$ at $y=\pm(N+1)$).
To justify \eqref{eq:key_identity}, note that the $(d-1)$-dimensional part of the Laplacian satisfies
\[
\sum_{y\in I_N} a_y\Bigl(-2(d-1)\,u(x,y)+\sum_{i=1}^{d-1}\bigl(u(x+\mathbf e_i,y)+u(x-\mathbf e_i,y)\bigr)\Bigr)
=
\Delta_{d-1}\mathbf F(u)(x),
\]
since $\mathbf F(u)(x\pm\mathbf e_i)=\sum_{y\in I_N} a_y\,u(x\pm\mathbf e_i,y)$ and boundary neighbors are interpreted as $0$.

Define, for $x\in B_{d-1}(N)$,
\[
T(x):=\sum_{y\in I_N} a_y\Bigl(-\Delta^{(1)} u_x(y)+V(x,y)\,u_x(y)\Bigr)
=\Bigl\langle a,\,-\Delta^{(1)}u_x+ (V_x\odot u_x)\Bigr\rangle,
\]
where $V_x\in\R^m$ denotes the vector $y\mapsto V(x,y)$ and $\odot$ is pointwise product.
Then \eqref{eq:key_identity} becomes
\begin{equation}\label{eq:DeltaF_T}
-\Delta_{d-1}\mathbf F(u)(x) + T(x)=0,\qquad x\in B_{d-1}(N).
\end{equation}

\medskip
\noindent\textbf{Step 2: define $V_1$ and verify the Schr\"odinger form.}
By Lemma~\ref{lem:vector}, for each $x\in B_{d-1}(N)$ we have
\[
\mathbf F(u)(x)=\langle a,u_x\rangle=0 \quad\Longleftrightarrow\quad u_x\equiv 0.
\]
In particular, if $\mathbf F(u)(x)=0$ then $u_x\equiv 0$, hence $\Delta^{(1)}u_x\equiv 0$ and $V_x\odot u_x\equiv 0$, so $T(x)=0$.
Therefore the ratio $T(x)/\mathbf F(u)(x)$ is well-defined if we interpret it as $0$ at points where $\mathbf F(u)(x)=0$.

Define $V_1:B_{d-1}(N)\to\R$ by
\[
V_1(x):=
\begin{cases}
\dfrac{T(x)}{\mathbf F(u)(x)}, & \text{if }\mathbf F(u)(x)\neq 0,\\[1.2ex]
0, & \text{if }\mathbf F(u)(x)=0.
\end{cases}
\]
This function is real-valued since $a,u,V$ are real-valued.
Moreover, multiplying the definition by $\mathbf F(u)(x)$ yields
\[
V_1(x)\,\mathbf F(u)(x)=T(x)\qquad\text{for all }x\in B_{d-1}(N),
\]
because when $\mathbf F(u)(x)=0$ we have $T(x)=0$ as shown above.
Substituting into \eqref{eq:DeltaF_T}, we conclude that for every $x\in B_{d-1}(N)$,
\[
-\Delta_{d-1}\mathbf F(u)(x) + V_1(x)\,\mathbf F(u)(x)=0.
\]
This is exactly the statement that $\mathbf F(u)$ solves
\[
-\Delta \mathbf F(u) + V_1\,\mathbf F(u)=0
\]
on $B_{d-1}(N)$ under the zero boundary condition. Finally, $F(u)(\mathbf{0}) \neq 0$ because $u(\mathbf{0}) \neq 0$ and Lemma \ref{lem:vector}.
\end{proof}

By Lemma \ref{lem:vector}, we have $|\{ x\in B_{d-1}(N) : \mathbf{F}(u)(x) \neq 0 \}| \leq |\{ x\in B_{d-1}(N) : u_{x} \neq 0 \}| \leq |\{ x\in B_{d}(N) : u(x) \neq 0 \}|$. Thus, we get $S_{d}(N) \geq S_{d-1}(N)$ which proves Theorem \ref{thm:main}.

Finally, we give a construction of solutions of \eqref{eq:Sch_intro} with low-dimensional support.
\begin{prop}\label{prop:example}
    $S_{d}(N)\leq (2N+1)^{d/2}$ for even dimensions, and $S_{d}(N) \leq (2N+1)^{(d+1)/2}$ for odd dimensions.
\end{prop}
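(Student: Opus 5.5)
The plan is to give an explicit construction, starting from a two-dimensional building block and taking tensor products. The key observation is that for any $u$ we may choose $V$ freely on $\supp(u)$, namely $V:=\Delta u/u$ there. So the only constraint to check is that $\Delta u(p)=0$ at every point $p\in B_d(N)\setminus\supp(u)$. Boundary points outside $B_d(N)$ impose no equation.

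\textbf{Step 1 (the case $d=2$).} Define $f$ on $B_2(N)$ by $f(x,x)=(-1)^x$ for $x\in I_N$ and $f=0$ off the diagonal. Take $(x_1,x_2)\in B_2(N)$ off the diagonal.
\begin{itemize}
\item If it is not adjacent to the diagonal, all four neighbours lie outside the support, so $\Delta f=0$ there.
\item If it is adjacent, say $(x,x+1)$ or $(x+1,x)$ with $-N\le x\le N-1$, its only supported neighbours are $(x,x)$ and $(x+1,x+1)$. Hence $\Delta f=(-1)^x+(-1)^{x+1}=0$.
\end{itemize}
Thus $f$ solves $-\Delta f+Vf=0$ with $V=\Delta f/f$ on the diagonal (and $V$ arbitrary elsewhere). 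Moreover $f(\mathbf 0)=1$ and $|\supp f|=2N+1$.

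\textbf{Step 2 (even $d=2k$).} Set
\[
u(x)=\prod_{i=1}^{k} f(x_{2i-1},x_{2i}).
\]
The Laplacian on $B_{2k}(N)$ with zero boundary condition splits as a sum of the two-dimensional Laplacians acting on the coordinate pairs, and the Dirichlet conditions factor accordingly. Hence
\[
\Delta u=\sum_{j=1}^{k}(\Delta_2 f)(x_{2j-1},x_{2j})\prod_{i\ne j}f(x_{2i-1},x_{2i}).
\]
Suppose $u(p)=0$. Then some factor $f(p_{2j-1},p_{2j})$ vanishes.
\begin{itemize}
\item Every term with index $i\neq j$ still contains the vanishing factor $f(p_{2j-1},p_{2j})$, so it is zero.
\item The term with index $j$ contains $(\Delta_2 f)(p_{2j-1},p_{2j})$, which is zero by Step 1.
\end{itemize}
So $\Delta u(p)=0$, and we may set $V=\Delta u/u$ on $\supp u$. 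We have $u(\mathbf 0)=1$ and $|\supp u|=(2N+1)^{k}=(2N+1)^{d/2}$.

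\textbf{Step 3 (odd $d=2k+1$).} Set $u(x',x_d)=w(x')g(x_d)$, where $w$ is the $2k$-dimensional example from Step 2 and $g\equiv 1$ on $I_N$. Then
\[
\Delta u=(\Delta w)\,g+w\,\Delta^{(1)}g.
\]
Off the support we have $w(x')=0$ and $\Delta w(x')=0$, so $\Delta u=0$ there. As before, $V=\Delta u/u$ on the support makes $u$ a solution. Also $u(\mathbf 0)=1$ and $|\supp u|=(2N+1)^{k}(2N+1)=(2N+1)^{(d+1)/2}$. For $d=1$ this is just $g\equiv1$.

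The only point needing care is the sign cancellation in Step 1 for off-diagonal points adjacent to the diagonal near the box boundary. The point $(N,N+1)$ lies in $\partial B_2(N)$, so it imposes no equation. Every interior adjacent point $(x,x+1)$ or $(x+1,x)$ has $-N\le x\le N-1$, so both of its diagonal neighbours lie in the box and the cancellation $(-1)^x+(-1)^{x+1}=0$ holds.
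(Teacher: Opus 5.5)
Your proof is correct and is essentially the paper's. Your $u=\prod_i f(x_{2i-1},x_{2i})$ (times $1$ in the last coordinate for odd $d$) is exactly the paper's function $(-1)^{x_1+x_3+\cdots+x_{2k-1}}\prod_j\mathbf 1_{\{x_{2j-1}=x_{2j}\}}$, and the potential $V=\Delta u/u$ you take on the support reproduces the paper's explicit choices ($-2d$ for even $d$; $-2d+2$ or $-2d+1$ for odd $d$). The only difference is packaging: you verify $\Delta u=0$ off the support by splitting the Laplacian over coordinate pairs and reducing to the two-dimensional diagonal check, while the paper does the equivalent case analysis directly in $B_d(N)$.
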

\begin{proof}
We exhibit, for each $d$ and $N$, a function $u$ on $B_d(N)$ with $u(\mathbf 0)\neq 0$ and a real potential $V$ on $B_d(N)$ such that
\[
-\Delta u + Vu = 0
\quad\text{on } B_d(N)
\]
(with the Dirichlet convention $u\equiv 0$ on $\partial B_d(N)=B_d(N+1)\setminus B_d(N)$).
Then $S_d(N)$ is at most the cardinality of the support of $u$.

Recall that for $a\in B_d(N)$,
\[
(\Delta u)(a)=-2d\,u(a)+\sum_{b\in\Z^d:\,|a-b|=1}u(b),
\]
where $u(b)=0$ if $b\notin B_d(N)$. Hence the equation $-\Delta u+Vu=0$ is equivalent to
\begin{equation}\label{eq:Sch_equiv}
(V(a)+2d)\,u(a)=\sum_{|a-b|=1}u(b),\qquad a\in B_d(N).
\end{equation}

\medskip
\noindent\textbf{Even dimensions.}
Assume $d=2k$ with $k\ge 1$. Define $u:B_d(N)\to\R$ by
\begin{equation}\label{eq:u_even}
u(x_1,\dots,x_{2k})
:=(-1)^{x_1+x_3+\cdots+x_{2k-1}}\prod_{j=1}^{k}\mathbf 1_{\{x_{2j-1}=x_{2j}\}}.
\end{equation}
Then $u(\mathbf 0)=1\neq 0$. Its support is exactly
\[
\{x\in B_{2k}(N): x_{2j-1}=x_{2j}\text{ for all }j=1,\dots,k\}.
\]
Therefore
\[
|\supp(u)|=(2N+1)^k=(2N+1)^{d/2}.
\]

We claim that \eqref{eq:Sch_equiv} holds for the constant potential $V\equiv -2d=-4k$.
Since $V+2d\equiv 0$, equation \eqref{eq:Sch_equiv} reduces to
\begin{equation}\label{eq:neighbor_sum_zero_even}
\sum_{|a-b|=1}u(b)=0,\qquad a\in B_{2k}(N).
\end{equation}
Fix $a=(a_1,\dots,a_{2k})\in B_{2k}(N)$ and set $\delta_j:=a_{2j-1}-a_{2j}$ for $j=1,\dots,k$.

\smallskip
\emph{Case 1: $a\in\supp(u)$ (i.e.\ $\delta_j=0$ for all $j$).}
Any nearest neighbor $b$ of $a$ is obtained by changing exactly one coordinate by $\pm1$.
This necessarily makes exactly one $\delta_j$ equal to $\pm1$, hence $b\notin\supp(u)$ and so $u(b)=0$.
Thus all neighbor values vanish and \eqref{eq:neighbor_sum_zero_even} holds.

\smallskip
\emph{Case 2: $a\notin\supp(u)$.}
If at least two indices $j$ satisfy $\delta_j\neq 0$, then no single-coordinate move can enforce all equalities
$x_{2j-1}=x_{2j}$ simultaneously, so every nearest neighbor $b$ also lies outside $\supp(u)$ and $u(b)=0$.
Hence the neighbor sum is $0$.

It remains to consider the case that exactly one index $j_0$ satisfies $\delta_{j_0}\neq 0$ and all other $\delta_j=0$.
If $|\delta_{j_0}|\ge 2$, then again no single step can make $\delta_{j_0}=0$, so every neighbor $b$ has $u(b)=0$.
If $\delta_{j_0}=1$, write $s:=a_{2j_0}$ so that $a_{2j_0-1}=s+1$.
Then the only neighbors that satisfy all equalities are
\[
b^{(1)}:=a-\mathbf e_{2j_0-1}
\quad\text{and}\quad
b^{(2)}:=a+\mathbf e_{2j_0},
\]
both of which lie in $B_{2k}(N)$ (since $s\in I_N$ and $s+1\in I_N$).
All other neighbors are off-support and contribute $0$.
Moreover, $b^{(1)}$ and $b^{(2)}$ agree in every coordinate except that the odd coordinate $x_{2j_0-1}$
equals $s$ for $b^{(1)}$ and equals $s+1$ for $b^{(2)}$; hence the parity factor in \eqref{eq:u_even} differs by a sign:
\[
u\bigl(b^{(2)}\bigr)=(-1)^{s+1+\sum_{j\ne j_0}a_{2j-1}}
=-(-1)^{s+\sum_{j\ne j_0}a_{2j-1}}
=-u\bigl(b^{(1)}\bigr).
\]
Thus the two nonzero neighbor contributions cancel, and the total neighbor sum is $0$.
The case $\delta_{j_0}=-1$ is analogous (with the roles of $2j_0-1$ and $2j_0$ swapped).
This proves \eqref{eq:neighbor_sum_zero_even}, and hence \eqref{eq:Sch_equiv} holds with $V\equiv -2d$.

Therefore, for even $d$, there exists a solution-potential pair $(u,V)$ with $|\supp(u)|=(2N+1)^{d/2}$, so
$S_d(N)\le (2N+1)^{d/2}$.

\medskip
\noindent\textbf{Odd dimensions.}
For $d=1$, the proposition is trivial because $|B_{1}(N)| = 2N+1$. 
Assume $d=2k+1$ with $k\ge 1$. Define $u:B_{2k+1}(N)\to\R$ by
\begin{equation}\label{eq:u_odd}
u(x_1,\dots,x_{2k},x_{2k+1})
:=(-1)^{x_1+x_3+\cdots+x_{2k-1}}\prod_{j=1}^{k}\mathbf 1_{\{x_{2j-1}=x_{2j}\}},
\end{equation}
i.e.\ the same pairing constraint on the first $2k$ coordinates, and no constraint on the last coordinate.
Again $u(\mathbf 0)=1\neq 0$, and now the free coordinates are $x_2,x_4,\dots,x_{2k}$ and $x_{2k+1}$, so
\[
|\supp(u)|=(2N+1)^{k+1}=(2N+1)^{(d+1)/2}.
\]

We now specify a potential $V$ for which \eqref{eq:Sch_equiv} holds.
Write $x=(x',x_{2k+1})$ with $x'\in\Z^{2k}$.
Define $V:B_{2k+1}(N)\to\R$ by
\begin{equation}\label{eq:V_odd_def}
V(x):=
\begin{cases}
-2d+2, & \text{if } |x_{2k+1}|<N,\\
-2d+1, & \text{if } |x_{2k+1}|=N.
\end{cases}
\end{equation}

Fix $a=(a',a_{2k+1})\in B_{2k+1}(N)$.

\smallskip
\emph{If $u(a)=0$}, then $a'$ violates at least one pairing equality among the first $2k$ coordinates.
Exactly as in the even-dimensional case, one checks that the sum of neighbor values $\sum_{|a-b|=1}u(b)$ equals $0$:
either there are no support neighbors at all, or there are exactly two support neighbors obtained by fixing a single violated pair with
difference $\pm1$, and their contributions cancel by the same sign argument as above. (Moves in the last coordinate never change $a'$
and thus cannot create support points.) Hence the right-hand side of \eqref{eq:Sch_equiv} is $0$, and the left-hand side is also $0$
because $u(a)=0$, so \eqref{eq:Sch_equiv} holds regardless of $V(a)$.

\smallskip
\emph{If $u(a)\neq 0$}, then $a'$ satisfies all pairing equalities, so any move in one of the first $2k$ coordinates breaks exactly one equality
and lands off-support; thus the only possible nonzero neighbors are those in the last coordinate direction:
\[
a\pm \mathbf e_{2k+1}.
\]
If $|a_{2k+1}|<N$, then both neighbors lie in $B_{2k+1}(N)$ and, since $u$ does not depend on $x_{2k+1}$, we have
$u(a+\mathbf e_{2k+1})=u(a-\mathbf e_{2k+1})=u(a)$. Hence
\[
\sum_{|a-b|=1}u(b)=2u(a).
\]
If $|a_{2k+1}|=N$, then exactly one of these two neighbors remains in $B_{2k+1}(N)$ and the other is outside (hence contributes $0$),
so
\[
\sum_{|a-b|=1}u(b)=u(a).
\]
Inserting these identities into \eqref{eq:Sch_equiv}, we see that \eqref{eq:Sch_equiv} holds precisely when
$V(a)+2d=2$ in the interior ($|a_{2k+1}|<N$) and $V(a)+2d=1$ on the top/bottom faces ($|a_{2k+1}|=N$),
which is exactly the choice \eqref{eq:V_odd_def}. Thus \eqref{eq:Sch_equiv} holds for all $a\in B_{2k+1}(N)$.

Therefore, for any odd $d$, there exists a solution-potential pair $(u,V)$ with $|\supp(u)|=(2N+1)^{(d+1)/2}$, so
$S_d(N)\le (2N+1)^{(d+1)/2}$.

Combining the even and odd cases completes the proof.
\end{proof}
\subsection*{Acknowledgement}
The author thanks Lingfu Zhang for valuable discussions.
\subsection*{Data Availability}
Data sharing not applicable to this article as no datasets were generated or analyzed during
the current study.
\subsection*{Declarations}
On behalf of all authors, the corresponding author states that there is no conflict of interest.
\bibliographystyle{style2}
\bibliography{bib2}

\end{document}